\title{Faster Dynamic Range Mode}
\author{Bryce Sandlund}{Cheriton School of Computer Science, University of Waterloo}{bcsandlund@gmail.com}{}{}
\author{Yinzhan Xu}{CSAIL, Massachusetts Institute of Technology}{xyzhan@mit.edu}{}{} 
\authorrunning{B. Sandlund and Y. Xu} 
\keywords{Range Mode, Min-Plus Product} 
\theoremstyle{plain}
\newtheorem{problem}[theorem]{Problem}
\newtheorem{fact}[theorem]{Fact}
\let\c@fconjecture\c@conjecture
\let\c@fconj\c@conj
\newcommand{\ignore}[1]{}
\def \polylog { \text{\rm polylog~} }
\DeclareMathOperator*{\argmin}{arg\,min}
\begin{document}

\maketitle

\begin{abstract}
In the dynamic range mode problem, we are given a sequence $a$ of length bounded by $N$ and asked to support element insertion, deletion, and queries for the most frequent element of a contiguous subsequence of $a$. In this work, we devise a deterministic data structure that handles each operation 
in worst-case $\tilde{O}(N^{0.655994})$ time, thus breaking the $O(N^{2/3})$ per-operation time barrier for this problem. The data structure is achieved by combining the ideas in Williams and Xu (SODA 2020) for batch range mode with a novel data structure variant of the Min-Plus product.


\end{abstract}

\newpage{}

\section{Introduction}
\label{sec:intro}

Given a sequence of elements $a_1, a_2, \ldots, a_n$, the dynamic range mode problem asks to support queries for the most frequent element in a specified subsequence $a_l, a_{l+1}, \ldots, a_r$ while also supporting insertion or deletion of an element at a given index $i$. 
The mode of a sequence of elements is one of the most basic data statistics, along with the median and the mean.
It is frequently computed in data mining, information retrieval, and data analytics.


The range mode problem seeks to answer multiple queries on distinct intervals of the data sequence without having to recompute each answer from scratch. Its study in the data structure community has shown that the mode is a much more challenging data statistic to maintain than other natural range queries: while range sum, min or max, median, and majority all support linear space dynamic data structures with poly-logarithmic or better time per operation~\cite{mih10,chan17,HMN11,gag17,elm11}, the current fastest dynamic range mode data structure prior to this paper requires a stubborn $O(n^{2/3})$ time per operation~\cite{zhms2018}. Indeed, range mode is one of few remaining classical range queries to which our currently known algorithms may be far from optimal. As originally stated by Brodal et al.~\cite{BGJS11} and mentioned by Chan et al.~\cite{C14} in 2011 and 2014, respectively, ``The problem of finding the most frequent element within a given array range is still rather open.''

The current best conditional lower bound, by Chan et al.~\cite{C14}, reduces multiplication of two $\sqrt{n} \times \sqrt{n}$ boolean matrices to $n$ range mode queries on a fixed array of size $O(n)$. This indicates that
if the current algorithm for boolean matrix multiplication is optimal, then
answering $n$ range mode queries on an array of size $O(n)$ cannot be performed in time better than $O(n^{3/2-\epsilon})$ time for $\epsilon > 0$ with combinatorial techniques, or $O(n^{\omega/2 - \epsilon})$ time for $\epsilon > 0$ in general, where $\omega < 2.373$ \cite{vmmult, legallmult} is the square matrix multiplication exponent. This reduction can be strengthened for dynamic range mode by reducing from the online matrix-vector multiplication problem~\cite{Henzinger15}. 
Using $O(n)$ dynamic range mode operations on a sequence of length $O(n)$, we can multiply a $\sqrt{n} \times \sqrt{n}$ boolean matrix with $\sqrt{n}$ boolean vectors given one at a time.
This indicates that a dynamic range mode data structure taking $O(n^{1/2 - \epsilon})$ time per operation for $\epsilon > 0$ is not possible with current knowledge.

Previous attempts indicate the higher $O(n^{2/3})$ per operation cost as the bound to beat~\cite{C14,zhms2018}. Indeed, $\tilde{O}(n^{2/3})$ time per operation\footnote{We use the $\tilde{O}(\cdot)$ notation to hide poly-logarithmic factors.} can be achieved with a variety of techniques, but crossing the $O(n^{2/3})$ barrier appears much harder.


Progress towards this goal has been established with the recent work of Williams and Xu~\cite{Williams20}. They show that by appealing to Min-Plus product of structured matrices, $n$ range mode queries on an array of size $n$ can be answered in $\tilde{O}(n^{1.4854})$ time, thus beating the combinatorial lower bound for batch range mode. This result also shows a separation between batch range mode and dynamic range mode: while batch range mode can be completed in $O(n^{1/2-\epsilon})$ time per operation, such a result for dynamic range mode would imply a breakthrough in the online matrix-vector multiplication problem.

Range mode is not the first problem  shown to be closely related to the Min-Plus product problem. It is well-known that the all-pairs shortest paths (APSP) problem is asymptotically equivalent to Min-Plus product~\cite{Fischer71}, in the sense that a $T(n)$ time algorithm to compute the Min-Plus product of two $n \times n$ matrices implies an $O(T(n))$ time algorithm for APSP in $n$-node graphs and vice versa. Although it is not known how to perform Min-Plus product of two arbitrary $n \times n$ matrices in time $O(n^{3-\epsilon})$ for $\epsilon > 0$, several problems reduce to Min-Plus products of matrices $A$ and $B$ which have nice structures that can be exploited. 
The simplest examples result by restricting edge weights in APSP problems~\cite{Seidel95,Shoshan99,Zwick02,Chan10,Yuster09}. Bringmann et al.~\cite{Bringmann17} show Language Edit Distance, RNA-folding, and Optimum Stack Generation can be reduced to Min-Plus product where matrix $A$ has small difference between adjacent entries in each row and column. Finally, the recent work of Williams and Xu~\cite{Williams20} reduces APSP in certain geometric graphs, batch range mode, and the maximum subarray problem with entries bounded by $O(n^{0.62})$ to a more general structured Min-Plus product, extending the result of Bringmann et al. 
All of the above structured Min-Plus products are solvable in truly subcubic $O(n^{3-\epsilon})$ time for $\epsilon > 0$, improving algorithms in the problems reduced to said product.


The connection and upper bound established by Williams and Xu~\cite{Williams20} of batch range mode to Min-Plus product suggest other versions of the range mode problem may be amenable to similar improvements. In particular, the ability to efficiently compute a batch of range mode queries via reducing to a structured Min-Plus product suggests that one might be able to improve the update time of dynamic range mode in a similar way.

\subsection{Our Results}
\label{ourresults}

In this paper, we break the $O(n^{2/3})$ time per operation barrier for dynamic range mode. We do so by adapting the result of Williams and Xu~\cite{Williams20}. Specifically, we define the following new type of data structure problem on the Min-Plus product that can be  applied to dynamic range mode, which may be of independent interest. Then we combine this data structure problem with the algorithm of Williams and Xu.

\begin{problem}[\textit{Min-Plus-Query} problem]
\label{mpq}
    During initialization, we are given two matrices $A, B$. For each query, we are given three parameters $i, j, S$, where $i, j$ are two integers, and $S$ is a set of integers. The query asks $\min_{k \not \in S} \{A_{i,k}+B_{k,j}\}$.
\end{problem}

Our performance theorem is the following.

\begin{theorem}
\label{maintheorem}
There exists a deterministic data structure for dynamic range mode on a sequence $a_1, \ldots, a_n$ that supports query, insertion, and deletion in worst-case $\tilde{O}(N^{0.655994})$ time per operation, where $N$ is the maximum size of the sequence at any point in time.
The space complexity of the data structure is $\tilde{O}(N^{1.327997})$. 
\end{theorem}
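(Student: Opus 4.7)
The plan is to reduce dynamic range mode to the Min-Plus-Query problem of Problem~\ref{mpq} by a block decomposition, and then to construct an efficient data structure for Problem~\ref{mpq} on top of the structured Min-Plus product of Williams and Xu~\cite{Williams20}.

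First I would partition the sequence into blocks of size $\Delta_1$ and precompute, in the spirit of Williams-Xu, matrices $A,B$ such that $A_{i,k}+B_{k,j}$ encodes (a monotone function of) the frequency of value $k$ in the full blocks lying strictly between boundaries $i$ and $j$. Since Williams-Xu give a truly subcubic algorithm for this structured Min-Plus computation, we can afford to rebuild $A$ and $B$ from scratch every $\Delta_2$ operations. Between rebuilds, maintain a buffer $U$ listing all values touched by insertions or deletions. A query on $[l,r]$ is decomposed as usual into two partial end blocks and a middle span of full blocks; the mode of $[l,r]$ is then the best of (i) the mode restricted to values appearing in the partial ends, (ii) the mode restricted to values in $U$, and (iii) the mode over values outside $U$ in the middle blocks. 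Items (i)--(ii) are handled directly by per-value occurrence lists in $\tilde{O}(\Delta_1+|U|)$ time, while item (iii) is precisely an instance of Problem~\ref{mpq} with forbidden set $S$ equal to $U$ together with the values seen in the partial ends, ensuring that no value is double-counted.

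The main obstacle is answering Min-Plus-Query faster than the naive per-query scan while keeping preprocessing within the Williams-Xu budget. My approach is to precompute, for each index pair $(i,j)$, not only the minimizer of $A_{i,k}+B_{k,j}$ but a short certificate of the next-best candidates (e.g.\ a sorted list or a hierarchical tournament), long enough that for any forbidden set of size $|S|\le\Delta_1+\Delta_2$ the query can be answered by walking past the disallowed $k$'s in $\tilde{O}(|S|)$ time. The delicate step is augmenting the Williams-Xu structured Min-Plus routine to emit several smallest candidates per cell within the same truly subcubic time and essentially the same space, so that the certificate tables can actually be built.

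Finally I would balance the three parameters $\Delta_1$, $\Delta_2$, and the exponent from the structured Min-Plus of~\cite{Williams20} so that the amortized rebuild cost per update, the direct handling cost $\tilde{O}(\Delta_1+\Delta_2)$ for partial blocks and the update buffer, and the Min-Plus-Query cost all equal $\tilde{O}(N^{0.655994})$; worst-case time is then obtained by standard deamortization (spreading each rebuild evenly over the next $\Delta_2$ operations using a shadow copy). Storing $A$, $B$, and the per-cell certificates yields the claimed $\tilde{O}(N^{1.327997})$ space bound at the optimum.
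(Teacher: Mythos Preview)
Your high-level framework---rebuild every $\Delta_2$ operations, buffer the modified values, and reduce the ``middle'' part of a query to a Min-Plus-Query with forbidden set $S$---matches the paper. But the proposal has two genuine gaps, and the place you label the ``delicate step'' is in fact the entire technical content of the result.

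First, the certificate idea does not go through as stated. You propose to store, for every cell $(i,j)$, a list of the $\approx|S|$ smallest candidates and to produce these lists by ``augmenting the Williams--Xu structured Min-Plus routine \ldots\ within the same truly subcubic time.'' No such augmentation is known, and the paper does \emph{not} compute top-$L$ lists per cell. Instead it builds a layered data structure (Lemmas~\ref{lem:both_small}--\ref{lem:lem_multi_inc}): at the base, Alon--Galil--Margalit's polynomial encoding is kept as a range tree over the value histogram $r^{i,j}_t$, so that forbidding an index $k$ is a single decrement; on top of this, columns of $B$ are bucketed by rank so that only one matrix has small entries; then the bounded-difference structure of $B$ is exploited via random reference columns and a covered/uncovered case split. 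The query time $\tilde O(L)$ comes from scanning at most two ``large'' buckets and a short uncovered list, not from walking a precomputed top-$L$ list. Your sketch skips all of this machinery, and the simple ``emit several smallest per cell'' approach would multiply the preprocessing time by $L$ and destroy the exponent.

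Second, you are missing the frequent/infrequent split and hence the rectangular shape of the product. If $k$ ranges over all values, the inner dimension is $\Theta(N)$, not $\Theta(T_1)$, and the Williams--Xu bound for \emph{square} structured Min-Plus does not directly give the claimed exponent. The paper restricts $k$ to the at most $T_1$ values of frequency $>N/T_1$ (infrequent values are handled by separate BSTs in $\tilde O((N/T_1)^2)$ per update), so the matrices are $n\times n^s$ with $n\approx N/T_3$ and $n^s\approx T_1$. The preprocessing bound then involves the \emph{rectangular} exponent $\omega(s)$, and the final exponent $0.655994$ comes from balancing $t_1,t_2,t_3$ against Corollary~\ref{cor:rect_mult}; none of this appears in your parameter balance. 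Without the frequent/infrequent threshold you also have no mechanism to bound the per-update cost outside of rebuilds.
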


Our result shows yet another application of the Min-Plus product to an independently-studied problem, ultimately showing a dependence of the complexity of dynamic range mode on the complexity of fast matrix multiplication. Further, in contrast to many other reductions to Min-Plus in which we must assume a structured input on the original problem~\cite{Seidel95,Shoshan99,Zwick02,Chan10,Yuster09,Williams20}, our algorithm works on the fully general dynamic range mode problem. In this sense, our result is perhaps most directly comparable to the batch range mode reduction of Williams and Xu~\cite{Williams20} and the Language Edit Distance, RNA-folding, and Optimum Stack Generation reductions of Bringmann et al.~\cite{Bringmann17}.

\subsection{Discussion of Technical Difficulty}

Despite the new $\tilde{O}(n^{1.4854})$ time algorithm for batch range mode \cite{Williams20}, we cannot directly apply the result to dynamic range mode. The main issue is the element deletion operation. In the range mode algorithm of Williams and Xu (and in many other range mode algorithms), critical points are chosen evenly distributed in the array, and the algorithm precomputes the range mode of intervals between every pair of critical points. In \cite{Williams20}, the improvement is achieved via a faster precomputation algorithm, which uses a Min-Plus product algorithm for structured matrices. However, if element deletion is allowed, the results stored in the precomputation will not be applicable. For example, an interval between two critical points could contain $x$ copies of element $a$, $x-1$ copies of element $b$, and many other elements with frequencies less than $x-1$. During precomputation, the range mode of this interval would be $a$. However, if we delete two copies of $a$, there is no easy way to determine that the mode of this interval has now changed to $b$.

We overcome this difficulty by introducing the Min-Plus-Query problem, as defined in Section~\ref{ourresults}. Intuitively, in the Min-Plus-Query problem, a large portion of the work of the Min-Plus product is put off until the query. It also supports more flexible queries. Using the Min-Plus-Query problem as a subroutine, we will be able to query the most frequent element excluding a set $S$ of forbidden elements. For instance, in the preceding example, we would be able to query the most frequent element that is not $a$. This is the main technical contribution of the paper. 

Another major difference between our algorithm for dynamic range mode and the batch range mode algorithm of Williams and Xu~\cite{Williams20} is the need for rectangular matrix multiplication. In our algorithm, we treat 
elements that appear more than about $N^{2/3}$ times differently from the rest (a similar treatment is given in the dynamic range mode algorithm of Hicham et al.~\cite{zhms2018}).
However, the number of critical points we use is about $N^{1/3}$; thus the number of critical points and frequent elements differ. This contrasts with batch range mode, where elements that appear more than about $\sqrt{n}$ times are considered frequent and the number of critical points used coincides with the number of frequent elements. The consequence of this difference is that a rectangular matrix product is required for dynamic range mode, while a square matrix product sufficed in~\cite{Williams20}.

\section{Related Work}
\label{sec:related}

The range mode problem was first studied formally by Krizanc et al.~\cite{Krizanc05}. They study space-efficient data structures for static range mode, achieving a time-space tradeoff of $O(n^{2-2\epsilon})$ space and $O(n^\epsilon \log n)$ query time for any $0 < \epsilon \leq 1/2$. They also give a solution occupying $O(n^2 \log \log n/\log n)$ space with $O(1)$ time per query.

Chan et al.~\cite{C14} also study static range mode, focusing on linear space solutions. They achieve a linear space data structure supporting queries in $O(\sqrt{n})$ time via clever use of arrays, which can be improved to $O(\sqrt{n/\log n})$ time via bit-packing tricks. Their paper also introduces the conditional lower bound which reduces multiplication of two $\sqrt{n} \times \sqrt{n}$ boolean matrices to $n$ range mode queries on an array of size $O(n)$. As mentioned, combined with the presumed hardness of the online matrix vector problem~\cite{Henzinger15}, this result indicates a dynamic range mode data structure must take greater than $O(n^{1/2-\epsilon})$ for $\epsilon > 0$ time per operation.
Finally, Chan et al.~\cite{C14} also give the first data structure for dynamic range mode. At linear space, their solution achieves $O(n^{3/4}\log n/ \log \log n)$ worst-case time per query and $O(n^{3/4} \log \log n)$ amortized expected time update, and at $O(n^{4/3})$ space, their solution achieves $O(n^{2/3} \log n/ \log \log n)$ worst-case time query and amortized expected update time.

Recently, Hicham et al.~\cite{zhms2018} improved the runtime of dynamic range mode to worst-case $O(n^{2/3})$ time per operation while simultaneously improving the space usage to linear. Prior to this paper, this result was the fastest data structure for dynamic range mode.

A cell-probe lower bound for static range mode has been devised by Greve et al.~\cite{GJLT10}. Their result states that any range mode data structure that uses $S$ memory cells of $w$-bit words needs $\Omega(\frac{\log n}{\log(Sw/n)})$ time to answer a query.

Via reduction to a structured Min-Plus product, Williams and Xu~\cite{Williams20} recently showed that $n$ range mode queries on a fixed array of size $n$ can be answered in $\tilde{O}(n^{1.4854})$ time. Williams and Xu actually show how to compute the \emph{frequency} of the mode for each query. We can adapt this method to find the element that is mode using the following binary search. For query $[l, r]$, we ask the frequency of the mode in range $[l, (l+r)/2]$. If it is the same, we repeat the search with right endpoint in range $[(l+r)/2, r]$; if it is not, we repeat the search with right endpoint in range $[l, (l+r)/2]$. Using this method, we can binary search until we determine when the frequency of the mode changes, thus finding the element that is mode in an additional $O(\log n)$ queries. The algorithm of Williams and Xu can also be used to speed up the preprocessing time of the $O(n)$ space, $O(\sqrt{n})$ query time static range mode data structure to $\tilde{O}(n^{1.4854})$ time.

Both static and dynamic range mode have been studied in approximate settings~\cite{BKMT05,GJLT10,Zein19}.

\section{Preliminaries}
\label{sec:prelim}

We formally define the Min-Plus product problem and the dynamic range mode problem. 

\begin{problem}[Min-Plus product]
The Min-Plus product of an $m \times n$ matrix $A$ and an $n\times p$ matrix $B$ is the $m\times p$ matrix $C=A\star B$ such that $C_{i,j}=\min_k \{A[i,k]+B[k,j]\}$.
\end{problem}

\begin{problem}[Dynamic Range Mode]
In the dynamic range mode problem, we are given an initially empty sequence and must support the following operations: 
\begin{itemize}
    \item Insert an element at a given position of the sequence. 
    \item Delete one element of the sequence. 
    \item Query the most frequent element of any contiguous subsequence. If there are multiple answers, output any. 
\end{itemize}
It is guaranteed that the size of the array does not exceed $N$ at any point in time. 
\end{problem}

We use $\omega$ to denote the square matrix multiplication exponent, i.e. the smallest real number such that two $n\times n$ matrices can be multiplied in $n^{\omega+o(1)}$ time. The current bound on $\omega$ is $2\leq \omega<2.373$ \cite{legallmult,vmmult}. In this work, we will use fast rectangular matrix multiplication. Analogous to the square case, we use $\omega(k)$ to denote the exponent of rectangular matrix multiplication, i.e., the smallest real number such that an $n\times n^k$ matrix and an $n^k\times n$ matrix can be multiplied in $n^{\omega(k)+o(1)}$ time. Le Gall and Urrutia \cite{gall2018improved} computed smallest upper bounds to date for various values of $k$. In this work, we are mostly interested in values of $\omega(k)$ listed in Figure~\ref{fig:rect_MM}.
\begin{figure}[h]
    \centering
    \begin{tabular}{| c|c|}
    \hline
       $k$  & Upper Bound on $\omega(k)$ \\ 
    \hline
       $1.75$ &  $3.021591$  \\
    \hline
       $2$ &  $3.251640$  \\
    \hline
    \end{tabular}
    \caption{Upper bounds for the exponent of multiplying an $n\times n^k$ matrix and an $n^k\times n$ matrix \cite{gall2018improved}.}
    \label{fig:rect_MM}
\end{figure}

It is known that the function $\omega(k)$ is convex for $k>0$ (see e.g. \cite{le2012faster}, \cite{lotti1983asymptotic}), so we can use values of $\omega(p)$ and $\omega(q)$ to give upper bounds for $\omega(k)$ as long as $p \le k \le q$. 
\begin{fact}
\label{fact:rect_MM_convex}
When $0 < p \le k \le q$, $\omega(k) \le \frac{k-p}{q-p} \omega(q) + \frac{q-k}{q-p} \omega(p)$.
\end{fact}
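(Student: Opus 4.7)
The plan is to derive the convexity inequality from the standard tensor (Kronecker) product construction for bilinear matrix-multiplication algorithms. Recall that if algorithm $\mathcal{A}_i$ multiplies an $a_i \times b_i$ matrix by a $b_i \times c_i$ matrix using $T_i$ bilinear operations, then $\mathcal{A}_1 \otimes \mathcal{A}_2$ multiplies an $(a_1 a_2) \times (b_1 b_2)$ matrix by a $(b_1 b_2) \times (c_1 c_2)$ matrix using $T_1 T_2$ bilinear operations. This tensoring of algorithms is the mechanism by which the exponent function $\omega(\cdot)$ inherits convex behavior.

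First I would fix an arbitrary $\eps > 0$ and invoke the definitions of $\omega(p)$ and $\omega(q)$: for every sufficiently large $m$ there exists an algorithm for the $\langle m, m^p, m\rangle$ product running in time $m^{\omega(p)+\eps}$, and similarly one for the $\langle m, m^q, m\rangle$ product running in time $m^{\omega(q)+\eps}$. I would instantiate these two algorithms at scales $n_1 = n^{q-k}$ and $n_2 = n^{k-p}$ respectively, for a common parameter $n$; both exponents are nonnegative because $p \le k \le q$.

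Next I would tensor the two algorithms to obtain one algorithm for an $\langle n_1 n_2,\, n_1^p n_2^q,\, n_1 n_2\rangle$ product. A short calculation gives $n_1 n_2 = n^{q-p}$ and $n_1^p n_2^q = n^{p(q-k)+q(k-p)} = n^{k(q-p)}$, so setting $N := n^{q-p}$ we obtain exactly an $\langle N, N^k, N\rangle$ product. Its running time is $n_1^{\omega(p)+\eps}\cdot n_2^{\omega(q)+\eps} = n^{(q-k)(\omega(p)+\eps) + (k-p)(\omega(q)+\eps)}$, which rewritten in base $N$ has exponent $\tfrac{q-k}{q-p}\omega(p) + \tfrac{k-p}{q-p}\omega(q) + \eps$. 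Letting $\eps \to 0$ yields the claimed bound on $\omega(k)$.

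The only mildly delicate step is ensuring $n_1, n_2$ are legitimate (integer) matrix dimensions to which the hypothesized algorithms apply. I would handle this by restricting $n$ to a subsequence along which $n^{q-k}$ and $n^{k-p}$ are integers, and by absorbing any rounding of the exponents into the $\eps$ slack (which is harmless since $\omega(\cdot)$ is itself defined up to an $n^{o(1)}$ factor in the running time). Beyond this bookkeeping, no genuine obstacle appears, which is why the convexity of $\omega$ is standard in the fast-matrix-multiplication literature.
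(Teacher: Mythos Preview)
Your argument is correct. The paper itself does not supply a proof of this fact; it simply records it as known and cites \cite{le2012faster,lotti1983asymptotic} for the convexity of $\omega(\cdot)$. What you have written is precisely the standard proof found in those sources: tensor an $\langle n_1,n_1^{p},n_1\rangle$ algorithm with an $\langle n_2,n_2^{q},n_2\rangle$ algorithm at scales $n_1=n^{q-k}$ and $n_2=n^{k-p}$, observe that the resulting dimensions are $\langle N,N^{k},N\rangle$ with $N=n^{q-p}$, and read off the exponent. Your handling of the integrality issue (restricting to a subsequence of $n$ and absorbing rounding into the $\eps$-slack, which is harmless because $\omega(\cdot)$ is an infimum defined up to $n^{o(1)}$) is the usual bookkeeping and is adequate. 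There is nothing to compare against in the paper beyond the bare citation.
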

Combining Figure~\ref{fig:rect_MM} and Fact~\ref{fact:rect_MM_convex}, we obtain the following bound on $\omega(k)$ when $k \in [1.75, 2]$. 
\begin{corollary}
\label{cor:rect_mult}
When $1.75 \le k \le 2$, $\omega(k) \le 0.920196 k + 1.41125$. 
\end{corollary}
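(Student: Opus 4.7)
The plan is to invoke Fact~\ref{fact:rect_MM_convex} with the endpoints $p=1.75$ and $q=2$, since these are precisely the two values of $k$ for which Figure~\ref{fig:rect_MM} supplies numerical upper bounds on $\omega(k)$, namely $\omega(1.75)\le 3.021591$ and $\omega(2)\le 3.251640$. For any $k\in[1.75,2]$, convexity gives
\[
\omega(k)\;\le\;\frac{k-1.75}{0.25}\,\omega(2)\;+\;\frac{2-k}{0.25}\,\omega(1.75),
\]
so substituting the Figure~\ref{fig:rect_MM} bounds upper-bounds $\omega(k)$ by an affine function of $k$.

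The second step is just to collect coefficients of this affine function. The coefficient of $k$ equals $4\bigl(\omega(2)-\omega(1.75)\bigr)\le 4(3.251640-3.021591)=0.920196$, and the constant term equals $4\bigl(2\,\omega(1.75)-1.75\,\omega(2)\bigr)\le 4(2\cdot 3.021591-1.75\cdot 3.251640)=1.411248\le 1.41125$. Combining these two estimates yields $\omega(k)\le 0.920196\,k+1.41125$ on the whole interval $[1.75,2]$, which is the claimed bound.

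There is essentially no technical obstacle here: the argument is a direct substitution into the convexity inequality of Fact~\ref{fact:rect_MM_convex}, followed by arithmetic. The only minor care needed is to round the constant term upward (from $1.411248$ to $1.41125$) so that the stated bound remains a valid upper bound on the exact linear interpolant, which it does since $1.411248<1.41125$.
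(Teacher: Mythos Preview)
Your approach is exactly the paper's: the corollary is stated immediately after Fact~\ref{fact:rect_MM_convex} as the result of ``combining Figure~\ref{fig:rect_MM} and Fact~\ref{fact:rect_MM_convex}'', i.e., linear interpolation between the two tabulated values. One small presentational slip: the inequality $4\bigl(\omega(2)-\omega(1.75)\bigr)\le 4(3.251640-3.021591)$ is not itself justified by separate upper bounds on $\omega(2)$ and $\omega(1.75)$; the correct step is to substitute the numerical bounds directly into the convex combination (both weights $\tfrac{k-1.75}{0.25}$ and $\tfrac{2-k}{0.25}$ are nonnegative on $[1.75,2]$), which yields the affine function $0.920196\,k+1.411248$ exactly and hence the stated bound.
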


\section{Main Algorithm}
A main technical component for our dynamic range mode algorithm is the use of the \textit{Min-Plus-Query} problem, which is formally defined in Section~\ref{sec:intro}. We are given two matrices $A, B$. For each query, we are given three parameters $i, j, S$,  and we need to compute $\min_{k \not \in S} \{A_{i,k}+B_{k,j}\}$.

If we just use the Min-Plus-Query problem, we can only compute the frequency of the range mode. Although we can binary search for the most frequent element as described in Section~\ref{sec:related}, we are also able to return the witness from the Min-Plus-Query problem organically. This construction may be of independent interest.
\begin{problem}[\textit{Min-Plus-Query-Witness} problem]
    During initialization, we are given two matrices $A, B$. For each query, we are given three parameters $i, j, S$, where $i, j$ are two integers, and $S$ is a set of integers. We must output an index $k^* 
    \notin S$ such that  $A_{i,k^*}+B_{k^*,j} = \min_{k \not \in S} \{A_{i,k}+B_{k,j}\}$.
\end{problem}

If $A$ is an $n \times n^{s}$ matrix and $B$ is an $n^{s} \times n$ matrix, then the naive algorithm for Min-Plus-Query just enumerates all possible indices $k$ for each query, which takes $O(n^{s})$ time per query. In order to get a faster algorithm for dynamic range mode, we need to achieve $\tilde{O}(n^{2+s - \epsilon})$ preprocessing time and $\tilde{O}(n^{s - \epsilon} + |S|)$ query time, for some $\epsilon>0$, where $A, B$ are some special matrices generated by the range mode instance. Specifically, matrix $B$ meets the following two properties:
\begin{enumerate}
    \item Each row of $B$ is non-increasing;
    \item The difference between the sum of elements in the $j$-th column and the sum of elements in the $(j+1)$-th column is at most $n^s$, for any $j$.
\end{enumerate}
Williams and Xu \cite{Williams20} give a faster algorithm for multiplying an arbitrary matrix $A$ with such matrix $B$, which leads to a faster algorithm for static range mode. We will show that  nontrivial data structures exist for the Min-Plus-Query problem for such input matrices $A$ and $B$. Such a data structure will lead to a faster algorithm for dynamic range mode.

In the following lemma, we show a data structure for the Min-Plus-Query problem when both input matrices have integer weights small in absolute value. 

\begin{lemma}
\label{lem:both_small}
Let $s \ge 1$ be a constant. Let $A$ and $B$ be two integer matrices of dimension $n \times n^s$ and $n^s \times n$, respectively, with entries in $\{-W, \ldots, W\} \cup \{\infty\} $ for some $W \ge 1$. Then we can solve the Min-Plus-Query problem of $A$ and $B$ in  $\tilde{O}(Wn^{\omega(s)})$ preprocessing time and $\tilde{O}(|S|)$ query time. The space complexity is $\tilde{O}(Wn^2 + n^{1+s})$. 
\end{lemma}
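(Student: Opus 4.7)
The plan is to reduce all of the Min-Plus data to one rectangular matrix product over $\tilde O(W)$-bit integers and cache per-value witness counts, so that each query is answered by a short scan through a sorted list whose effective length is controlled by $|S|$.

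For preprocessing, define, for each $t\in\{-2W,\ldots,2W\}$ and each pair $(i,j)$, the count
\[
 M^{(t)}_{i,j}:=\bigl|\{k:A_{i,k}+B_{k,j}=t\}\bigr|,
\]
where $k$ with $A_{i,k}=\infty$ or $B_{k,j}=\infty$ are excluded by convention. All of these counts can be extracted from a single rectangular integer matrix product via Kronecker substitution: replace $A_{i,k}$ by $y^{A_{i,k}+W}$ and $B_{k,j}$ by $y^{B_{k,j}+W}$ (with $\infty$ encoded as $0$), where $y$ is a fixed integer larger than $n^{s}$. The entries of $\tilde A\cdot\tilde B$ then encode the entire vector $\bigl(M^{(t)}_{i,j}\bigr)_{t}$ as digits in base $y$, since every true coefficient of the underlying polynomial product is a nonnegative count bounded by $n^{s}$. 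Each encoded integer has $\tilde O(W)$ bits, so multiplying an $n\times n^{s}$ and an $n^{s}\times n$ matrix over such integers costs $\tilde O(Wn^{\omega(s)})$. Once the $M^{(t)}_{i,j}$ are read off, I would also precompute, for each $(i,j)$, the sorted list $L_{i,j}$ of those $t$ with $M^{(t)}_{i,j}>0$; the total size of these lists is $O(Wn^{2})$, so the overall space is $\tilde O(Wn^{2}+n^{1+s})$.

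To answer a query $(i,j,S)$, I use the observation that excluding $S$ changes $M^{(t)}_{i,j}$ only at values $t$ of the form $v_{k}:=A_{i,k}+B_{k,j}$ for some $k\in S$. I would compute all $v_{k}$ in $O(|S|)$ time, group them by value using a hash map, and then walk through $L_{i,j}$ in increasing order. For each candidate $t$: if $t\notin\{v_{k}:k\in S\}$, return $t$ immediately, since the true count is unchanged and positive; otherwise subtract the number of $k\in S$ with $v_{k}=t$ from $M^{(t)}_{i,j}$, and return $t$ if the residual is positive. Any iteration that fails to return must hit a distinct element of $\{v_{k}:k\in S\}$, so the walk terminates after at most $|S|+1$ iterations (or returns $\infty$ if $L_{i,j}$ is exhausted first), giving total query time $\tilde O(|S|)$.

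The main technical point is bounding the scan length by $|S|$ rather than by the potentially large $|L_{i,j}|$; this hinges on the fact that the only values of $t$ whose count can decrease at query time are exactly $\{v_{k}:k\in S\}$, and any other value in $L_{i,j}$ encountered first would be a correct answer and cause the scan to stop. Two smaller subtleties I would handle in passing are the treatment of $\infty$ (both in the Kronecker encoding and by assigning $v_{k}=\infty$ to excluded $k$ with infinite contribution, so that each such $k$ costs $O(1)$), and the choice of Kronecker base $y>n^{s}$ to guarantee that encoded digits do not collide; the resulting word size $\tilde O(W)$ keeps the arithmetic cost in line with the claimed $\tilde O(Wn^{\omega(s)})$ preprocessing bound.
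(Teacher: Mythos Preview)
Your proposal is correct and follows essentially the same approach as the paper: both encode the per-pair count vectors $r^{i,j}_t=|\{k:A_{i,k}+B_{k,j}=t\}|$ as base-$(n^s{+}1)$ digits of a single rectangular integer matrix product (the Alon--Galil--Margalit trick), and both answer a query by temporarily accounting for the at most $|S|$ values $t$ whose count can drop. The only cosmetic difference is that the paper maintains each count vector in a range tree (decrement the $|S|$ affected positions, query the smallest nonzero, restore), whereas you keep a static sorted list $L_{i,j}$ and scan it, using the observation that every non-terminating step consumes a distinct value from $\{A_{i,k}+B_{k,j}:k\in S\}$; your scan/merge variant is arguably a touch simpler and yields the same $\tilde O(|S|)$ bound.
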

\begin{proof}
The algorithm uses the idea by Alon, Galil and Margalit in \cite{AlonGM97}, which computes the Min-Plus product of $A, B$ in $\tilde{O}(Wn^{\omega(s)}) $ time. 

In their algorithm, they first construct matrix $A'$ defined by 
\[
A'_{i,k}=\left\{
  \begin{array}{ll}
    (n^s+1)^{A_{i,k}+W} & \text{if } A_{i,k} \ne \infty,\\
    0 & \text{otherwise}.
  \end{array}
\right.
\]
We can define $B'$ similarly. Then the product $A'B'$ captures some useful information about the Min-Plus product of $A$ and $B$. Namely, for each entry $(A' B')_{i,j}$, we can uniquely write it as $\sum_{t \ge 0} r^{i,j}_t (n^s+1)^t$ for integers $0 \le r^{i,j}_t \le n^s$. Note that $r^{i,j}_t$ exactly equals the number of $k$ such that $A_{i,k}+B_{k,j} = t-2W$. Thus, we can use $A'B'$ to compute the Min-Plus Product of $A$ and $B$. 

In our algorithm, we use a range tree to maintain the sequence $r^{i,j}_t$ for each pair of $i, j$. The preprocessing takes $\tilde{O}(Wn^{\omega(s)})$ time, which is the time to compute $A'B'$ and the sequences $r^{i,j}_t$. 

During each query, we are given $i, j, S$. We enumerate each $k \in S$, and decrement $r^{i,j}_{A_{i,k}+B_{k,j}+2W}$ in the range tree if $A_{i,j}+B_{k,j} < \infty$. After we do this for every $k \in S$, we query the range tree for the smallest $t$ such that $r^{i,j}_t \ne 0$, so $t-2W$ is the answer to the Min-Plus-Query query. After each query, we need to restore the values of $r^{i,j}$, which can also be done efficiently. The query time is $\tilde{O}(|S|)$, since each update and each query of range tree takes $\tilde{O}(1)$ time. The space complexity should be clear from the algorithm. 
\end{proof}

In the previous lemma, the data structure only answers the Min-Plus-Query problem. In all subsequent lemmas, the data structure will be able to handle the Min-Plus-Query-Witness problem. 

In the next lemma, we use Lemma~\ref{lem:both_small} as a subroutine to show a data structure for the Min-Plus-Query-Witness problem when only matrix $A$ has small integer weights in absolute value. 


\begin{lemma}
\label{lem:A_small}
Let $s \ge 1$ be a constant. Let $A$ and $B$ be two integer matrices of dimension $n \times n^s$ and $n^s \times n$, respectively, where $A$ has entries in $\{-W, \ldots, W\} \cup \{\infty\} $ for some $W \ge 1$, and $B$ has arbitrary integer entries represented by $\polylog n$ bit numbers. 
Then for every integer $1 \le P \le n^s$, 
 we can solve the Min-Plus-Query-Witness problem of $A$ and $B$ in $O(\frac{n^s}{P} W n^{\omega(s)})$ preprocessing time and $O(|S| + P)$ query time. The space complexity is $\tilde{O}(\frac{Wn^{2+s}}{P} + \frac{n^{1+2s}}{P})$. 
\end{lemma}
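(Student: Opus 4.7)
The plan is to reduce to $n^s/P$ invocations of Lemma~\ref{lem:both_small}, each on a small-weight surrogate of $B$, plus an $O(P)$ boundary correction per query. For each column $j$ of $B$ I would precompute the sorted order of $\{B_{k,j}\}_k$ and partition the $n^s$ rows into $n^s/P$ consecutive rank-blocks $G^{(j)}_1, \dots, G^{(j)}_{n^s/P}$ of $P$ rows each. For each slab index $r \in \{1, \dots, n^s/P\}$ I would construct an $n^s \times n$ matrix $B^{(r)}$ whose $(k,j)$ entry records $B_{k,j}$ shifted by a column-specific offset $c_j^{(r)}$ when $k \in G^{(j)}_r$ and $\infty$ otherwise, with the offsets chosen so that the non-$\infty$ entries lie in $\{-W, \dots, W\}$. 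Applying Lemma~\ref{lem:both_small} to $A$ and each $B^{(r)}$ yields $n^s/P$ Min-Plus-Query data structures whose combined preprocessing is $\tilde{O}(\tfrac{n^s}{P} W n^{\omega(s)})$ and combined space is $\tilde{O}(\tfrac{W n^{2+s}}{P} + \tfrac{n^{1+2s}}{P})$, matching the stated bounds.

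For a query $(i, j, S)$, I would first use the sorted order on column $j$ to locate, in $O(|S|)$ time, the smallest $B_{k,j}$ with $k \notin S$; call it $m^\star$ and the corresponding slab $r^\star$. Because $A$ has entries in $\{-W, \dots, W\}$, the global witness satisfies $B_{k^\star, j} \in [m^\star, m^\star + 2W]$, an interval overlapping a constant number of slabs under a suitable granularity of the partition. I would then query the Min-Plus-Query data structure for $B^{(r^\star)}$ on $(i, j, S)$ in $\tilde{O}(|S|)$ time and brute-force examine the $O(P)$ rows in the neighboring slabs whose $B_{k,j}$ lies in $[m^\star, m^\star + 2W]$, returning the minimum and its associated $k^\star$. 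This yields an $O(|S| + P)$ query; the witness is produced either from the Min-Plus-Query call, augmented by attaching a back-pointer to each contributing entry through the Alon--Galil--Margalit exponents, or directly from the boundary scan.

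The main obstacle is engineering $B^{(r)}$ so that its entries truly fit into $\{-W, \dots, W\}$ while a bounded number of slab queries plus an $O(P)$ boundary scan always recover the true global minimum. Because the raw range of $B_{k,j}$ across $k \in G^{(j)}_r$ can easily exceed $2W$, a direct shift by a per-block minimum does not always suffice, and the construction must either refine the partition to guarantee small slab widths (trading slab count against slab breadth) or further split each slab into a constant number of value-aligned sub-blocks, with rows whose shifted value exceeds $W$ absorbed into the boundary scan. Showing that one of these variants simultaneously respects the $\tilde{O}(\tfrac{n^s}{P} W n^{\omega(s)})$ preprocessing budget and the $O(|S| + P)$ query budget, and that the witness can be extracted through Lemma~\ref{lem:both_small}'s machinery without an extra binary-search overhead, is the delicate part of the argument.
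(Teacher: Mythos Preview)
Your preprocessing skeleton matches the paper's: sort each column of $B$, partition the $n^s$ row indices into $n^s/P$ rank-buckets of size $P$, shift bucket values into $\{-W,\dots,W\}$ where possible, and invoke Lemma~\ref{lem:both_small} once per bucket index. The gaps are both in the query strategy.

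First, your localization step is invalid because $A$ has entries in $\{-W,\dots,W\}\cup\{\infty\}$, not just $\{-W,\dots,W\}$. If $A_{i,k}=\infty$ for the $k$ realizing your $m^\star=\min_{k\notin S}B_{k,j}$, the bound $B_{k^\star,j}\le m^\star+2W$ fails and the true witness can sit arbitrarily far down the sorted column. What you actually need is the smallest $B_{k,j}$ with $k\notin S$ \emph{and} $A_{i,k}<\infty$, and that depends on the query row $i$; it cannot be read off the column sort in $O(|S|)$ time. The paper handles exactly this with an extra batch of $0/1$ matrix products $\bar A\,\bar B^\ell$ recording, for each $(i,j)$ and each bucket $\ell$, how many $k$ in that bucket have $A_{i,k}<\infty$; a per-$(i,j)$ range tree on these counts lets a query decrement the entries hit by $S$ and then find the two smallest-index ``large'' buckets that still contain a finite $A$-entry, after which a short monotonicity argument shows those two buckets suffice and are scanned in $O(P)$ time.

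Second, even granting the window $[m^\star,m^\star+2W]$, it need not meet only $O(1)$ buckets: buckets whose value-width is at most $2W$ (the paper's ``small'' buckets) can pile up inside that window in unbounded number, so neither a constant number of Lemma~\ref{lem:both_small} calls nor an $O(P)$ scan covers them, and your proposed refinements (more slabs, value-aligned sub-blocks) either blow up the slab count past $n^s/P$ or still leave unbounded overlap. The paper does not try to localize among small buckets at all. It precomputes $(A\star B^\ell)_{i,j}$ for \emph{every} small bucket $\ell$ and stores, for each $(i,j)$, a range tree indexed by $\ell$; a query then re-evaluates via Lemma~\ref{lem:both_small} only the at most $|S|$ small buckets that $S$ touches, and combines those fresh values with the range-tree minimum over all untouched buckets in $\tilde O(1)$. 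The witness is recovered at the end by scanning the $P$ entries of whichever bucket wins. This range-tree aggregation across all small buckets, together with the separate $0/1$-product treatment of large buckets, is the idea your plan is missing.
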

\begin{proof}

For simplicity, assume $P$ is a factor of $n^s$. We sort each column of matrix $B$ and put entries whose rank is between $(\ell-1)P + 1$ and $\ell P$ into the $\ell$-th bucket. We use $K_{j, \ell}$ to denote the set of row indices of entries in the $\ell$-th bucket of the column $j$. We use $L_{j,\ell}$ to denote the smallest entry value of the bucket $K_{j, \ell}$, and use $H_{j, \ell}$ to denote the largest entry value. Formally, 
\[L_{j, \ell} = \min_{k \in K_{j, \ell}} B_{k, j} \quad \text{and} \quad H_{j, \ell} = \max_{k \in K_{j, \ell}} B_{k, j}.\]

For each $\ell \in [n^s / P]$, we do the following\footnote{We use $[n]$, with $n$ integer, to denote the set $\{1, 2, \ldots, n\}$.}. We create an $n^s \times n$ matrix $B^\ell$ and initialize all its entries to $\infty$. Then for each column $j$, if $H_{j,\ell} - L_{j, \ell} \le 2W$ (we will call it a small bucket), we set $B^\ell_{k, j} := B_{k,j} - L_{j, \ell}  - W$ for all $k \in K_{j, \ell}$. We will handle the case $H_{j,\ell} - L_{j, \ell} > 2W$ (large bucket) later. Clearly, all entries in $B^\ell$ have values in $\{-W, \ldots, W\} \cup \{\infty\}$, so we can use the algorithm in Lemma~\ref{lem:both_small} to preprocess $A$ and $B^\ell$ and store the data structure in $D^\ell$. Also, for each pair $(i, j)$, we create a range tree $\mathcal{T}_{\text{small}}^{i,j}$ on the sequence $(A  \star B^1)_{i,j}, (A  \star B^2)_{i,j}, (A  \star B^3)_{i,j}, \ldots$, $(A  \star B^{n^s/P})_{i,j}$, which stores the optimal Min-Plus values when $k$ is  from a specific small bucket. 
This part takes $\tilde{O}(\frac{n^s}{P} W n^{\omega(s)})$ time. The space complexity is $\frac{n^s}{P}$ times more than the space complexity of Lemma~\ref{lem:both_small}, so space complexity of this part is $\tilde{O}(\frac{Wn^{2+s}}{P} + \frac{n^{1+2s}}{P})$.

We also do the following preprocessing for buckets where $H_{j,\ell} - L_{j, \ell} > 2W$. We first create a $0/1$ matrix $\bar{A}$ where $\bar{A}_{i,k} = 1$ if and only if $A_{i, k} \ne \infty$. Then for each $\ell \in [n^s/ P]$, we create a $0/1$ matrix $\bar{B}^\ell$ such that $\bar{B}^\ell_{k,j} = 1$ if and only if $k \in K_{j,\ell}$ and $H_{j,\ell} - L_{j, \ell} > 2W$. Then we use fast matrix multiplication to compute the product $\bar{A}\bar{B}^\ell$. If $K_{j, \ell}$ is a large bucket, the $(i,j)$-th entry of $\bar{A}\bar{B}^\ell$ is the number of $k \in K_{j,\ell}$ such that $A_{i,k} < \infty$; if $K_{j, \ell}$ is a small bucket, the $(i,j)$-th entry is $0$. For each pair $(i, j)$, we create a range tree $\mathcal{T}_{\text{large}}^{i,j}$ on the sequence $(\bar{A}\bar{B}^1)_{i,j}, (\bar{A}\bar{B}^2)_{i,j}, (\bar{A}\bar{B}^3)_{i,j}, \ldots, (\bar{A}\bar{B}^{n^s/P})_{i,j}$. This part takes $\tilde{O}(\frac{n^s}{P} n^{\omega(s)})$ time, which is dominated by the time for small buckets. The space complexity is also dominated by the data structures for small buckets.

Now we describe how to handle a query $(i, j, S)$. First consider small buckets. In $O(|S|)$ time, we can compute the set of small buckets $K_{j, \ell}$ that intersect with $S$. For each such $K_{j, \ell}$, we can query the data structure $D^\ell$ with input $(i, j, S \cap K_{j, \ell})$ to get the optimum value when $k \in K_{j, \ell}$. For each small bucket that intersects with $S$, we can set its corresponding value in the range tree $\mathcal{T}_{\text{small}}^{i, j}$ to $\infty$, then we can compute the optimum value of all small buckets that do not intersect with $S$ by querying the minimum value of the range tree $\mathcal{T}_{\text{small}}^{i, j}$. After this query, we need to restore all values in the range tree. 
It takes $\tilde{O}(|S|)$ time to handle small buckets on query.

Now consider large buckets. 
Intuitively, we want to enumerate indices in all large buckets $K_{j,\ell}$ such that there exists  an index $k \in K_{j, \ell} \cap ([n^s] \setminus S)$ where $A_{i,k} < \infty$. However, doing so would be prohibitively expensive. We will show that we only need two such buckets. 
Consider three large buckets $l_1 < l_2 < l_3$. 
Pick any $k_1 \in T_{j, l_1}, k_3 \in T_{j, l_3}$ such that $A_{i, k_1} < \infty$. Since
\[A_{i, k_1} + B_{k_1, j} \le W+ L_{j, l_2} < W+H_{j, l_2} - 2W < A_{i, k_3} + B_{k_3, j}, \]
$k_3$ can never be the optimum. Thus, it suffices to find the smallest two buckets such that there exists  an index $k \in K_{j, \ell} \cap ([n^s] \setminus S)$ where $A_{i,k} < \infty$, and then enumerate all indices in these two buckets. To find such two buckets, we can enumerate over all indices $k \in S$, and if $A_{i,k} < \infty$ we can decrement the corresponding value in the range tree  $\mathcal{T}_{\text{large}}^{i,j}$. Thus, we can  compute the two smallest buckets by querying the two earliest nonzero values in the range tree. We also need to restore the range tree after the query. The range tree part takes $\tilde{O}(|S|)$ time and scanning the two large buckets requires $O(P)$ time. 
Thus, 
this step takes $\tilde{O}(|S| + P)$ time. 

At this point, we will know the bucket that contains the optimum index $k^*$. Thus, we can iterate all indices in this bucket to actually get the witness for the Min-Plus-Query-Witness query. It takes $O(P)$ time to do so. 

In summary, the preprocessing time, query time, and space complexity meet the promise in the lemma statement. 

\end{proof}

In the following lemma, we show a data structure for the Min-Plus-Query-Witness problem when the matrix $B$ has the \textit{bounded difference} property, which means that nearby entries in each row have close values. The proof adapts the strategy of \cite{Williams20}. 

\begin{lemma}
\label{lem:bounded_diff_B}
Let $s \ge 1$ be a constant. Let $A$ be an $n \times n^s$ integer matrix, and let $B$ be an $n^s \times n$ integer matrix. It is guaranteed that there exists $1 \le \Delta \le \min\{n, W\}$, such that for every $k$, $|B_{k, j_1} - B_{k, j_2}| \le W$ as long as $\lceil j_1 / \Delta \rceil = \lceil j_2/ \Delta \rceil$.  
Then for every $L = \Omega(\Delta)$,
 we can solve the Min-Plus-Query-Witness problem of $A$ and $B$ in $\tilde{O}(\Delta^2 \frac{n^s}{L} W n^{\omega(s)} + \frac{n^{2+s}}{\Delta})$ preprocessing time and $\tilde{O}(L)$ query time, when $|S| < L$. The space complexity is $\tilde{O}(\frac{\Delta^2 W n^{2+s}}{L} + \frac{\Delta^2 n^{1+2s}}{L} + \frac{n^{2+s}}{\Delta})$.

\end{lemma}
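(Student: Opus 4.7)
The plan is to combine a shift-and-block transformation of $B$ (to manufacture small entries) with a per-block application of Lemma~\ref{lem:A_small}. First I partition the columns of $B$ into $n/\Delta$ consecutive blocks of width $\Delta$. By the bounded-difference hypothesis, inside block $t$ every row $k$ of $B$ varies by at most $W$, so subtracting the row-dependent shift $\sigma^t_k := B_{k,(t-1)\Delta+1}$ produces a local matrix $\hat{B}^t$ of dimension $n^s \times \Delta$ whose entries lie in $\{-W,\ldots,W\}$. I absorb the shift into a modified left matrix $A^t$ defined by $A^t_{i,k} := A_{i,k}+\sigma^t_k$, so that $(A \star B^t)_{i,j'} = (A^t \star \hat{B}^t)_{i,j'}$ for every block $t$. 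Writing down and storing the $n/\Delta$ matrices $A^t$ is what produces the additive $\tilde{O}(n^{2+s}/\Delta)$ term in both the preprocessing time and the space.

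For each block $t$, the residual subproblem is a Min-Plus-Query-Witness instance whose left matrix $A^t$ is arbitrary and whose right matrix $\hat{B}^t$ has entries of absolute value at most $W$. By transposing the pair, this becomes exactly the setting of Lemma~\ref{lem:A_small} with a small-entry left matrix of dimension $\Delta \times n^s$ and an arbitrary right matrix of dimension $n^s \times n$, with a bucket-size parameter $P$ that I will tune. Because the hypothesis on $B$ is only within a block, I must keep the $n/\Delta$ block data structures separate; however, any query $(i,j,S)$ singles out the unique block $t_j = \lceil j/\Delta\rceil$ containing column $j$ and invokes the data structure for block $t_j$ exactly once. Hence the query time is a single call to Lemma~\ref{lem:A_small}, namely $\tilde{O}(|S|+P)$.

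The witness-extraction mechanism of Lemma~\ref{lem:A_small} carries over unchanged inside each block: the two-smallest-nonempty-large-buckets argument returns an $O(P)$-size bucket containing the optimum index, which is then scanned to produce the actual $k^* \notin S$. It is also worth noting that Lemma~\ref{lem:A_small}, although stated for square-ish dimensions $n \times n^s$ versus $n^s \times n$, only uses the Alon--Galil--Margalit product of its two input matrices, so it continues to apply to the skewed dimensions $\Delta \times n^s$ versus $n^s \times n$ produced by each per-block instance.

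The main obstacle is the parameter calibration. I would choose the internal bucket size $P$ so that the $n/\Delta$ per-block preprocessing costs aggregate to $\tilde{O}(\Delta^2 n^s W n^{\omega(s)}/L)$ and, simultaneously, the single per-query cost $\tilde{O}(|S|+P)$ stays within $\tilde{O}(L)$ whenever $|S|<L$. Matching the claimed preprocessing forces $P$ to scale like $\Theta(nL/\Delta^3)$, and it is precisely the hypothesis $L = \Omega(\Delta)$ that gives enough slack to keep $P = O(L)$ so that the $\tilde{O}(L)$ query bound survives. Once $P$ is fixed, the stated space bound $\tilde{O}(\Delta^2 Wn^{2+s}/L + \Delta^2 n^{1+2s}/L + n^{2+s}/\Delta)$ follows by multiplying the per-block space guarantee of Lemma~\ref{lem:A_small} by $n/\Delta$ and adding the cost of the stored $A^t$'s. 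I expect verifying this calibration carefully, and checking that the rectangular Alon--Galil--Margalit products retain the exponent $\omega(s)$ under the skewed dimensions, to be the bulk of the technical work.
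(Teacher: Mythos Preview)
Your block-and-shift reduction is natural, but the parameter calibration you sketch does not close. You set $P = \Theta(nL/\Delta^3)$ to match the target preprocessing time, then assert that the hypothesis $L = \Omega(\Delta)$ is what guarantees $P = O(L)$. It is not: $P = nL/\Delta^3 = O(L)$ is equivalent to $\Delta^3 = \Omega(n)$, a condition completely unrelated to $L \ge \Delta$. In the downstream application (Lemma~\ref{lem:lem_multi_inc} with the parameters chosen in the proof of Theorem~\ref{maintheorem}) one has $\Delta$ of order roughly $N^{0.05}$ while $n$ is of order roughly $N^{0.34}$, so $\Delta^3 \ll n$ and your query time blows up well past $\tilde{O}(L)$. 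Even granting an idealized linear scaling $M(\Delta,n^s,n) \approx (\Delta/n)\,n^{\omega(s)}$, the same calculation would force $P = L/\Delta^2$ and hence $P \ge 1$ would require $L = \Omega(\Delta^2)$, still stronger than the stated hypothesis.

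The underlying reason is structural: you end up with $n/\Delta$ independent instances of Lemma~\ref{lem:A_small}, and since both the shifted left matrix $A^t$ and the local right matrix depend on the block index $t$, there is no way to batch them into fewer fast products. The paper avoids this entirely by using the Williams--Xu sampling trick rather than deterministic column blocking. One chooses $\rho = \Delta$ random reference columns $j^r$, forms $A^r_{i,k} = A_{i,k}+B_{k,j^r}-\hat{C}^L_{i,j^r}$ and $B^r_{k,j} = B_{k,j}-B_{k,j^r}$, and keeps each entry $A^r_{i,k}$ in only the first round where $|A^r_{i,k}| \le 3W$. This yields $\rho = \Delta$ full-size calls to Lemma~\ref{lem:A_small} (not $n/\Delta$), producing the $\Delta^2$ factor and the query split $|S|+\rho P = O(L)$ with $P = L/\Delta$; the residual ``uncovered'' pairs are then handled by a separate $\tilde{O}(n^{2+s}/\Delta)$ enumeration using the estimation matrix $\hat{C}^L$. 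Your proposal has no analogue of either the sampling/covering step or the uncovered-pair enumeration, and without them the bound of the lemma is not attainable.
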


\begin{proof}

\textbf{Preprocessing Step 1: Create an Estimation Matrix}\vspace{1pt}

First, we create a matrix $\hat{B}$, where $\hat{B}_{k, j} = B_{k, \lceil  j / \Delta \rceil \Delta}$. By the property of matrix $B$, $|\hat{B}_{k, j} - B_{k, j}| \le W$ for every $k, j$. For each pair $(i, j)$, we compute the $L$-th smallest value of $A_{i,k}+\hat{B}_{k,j}$ among all $1 \le k \le n^s$, and denote this value by $\hat{C}^L_{i,j}$. Notice that $\hat{C}^L_{i,j} = \hat{C}^L_{i, \lceil  j / \Delta \rceil \Delta}$, so it suffices to compute $\hat{C}^L_{i,j}$ when $j$ is a multiple of $\Delta$, and we can infer other values correctly. It takes $O(n^s)$ time to compute each $\hat{C}^L_{i,j}$, so this step takes $O(n^{2+s}/\Delta)$ time. 

If we similarly define $C^L_{i,j}$ as the $L$-th smallest value of $A_{i,k}+B_{k,j}$ among all $1 \le k \le n^s$, then $|C^L_{i,j}-\hat{C}^L_{i,j}| \le W$ by the following claim, whose proof is omitted for space constraint. 

\begin{claim*}
Given two sequences $(a_k)_{k=1}^m$ and $(b_k)_{k=1}^m$ such that $|a_k - b_k| \le W$, then the $L$-th smallest element of $a$ and the the $L$-th smallest element of $b$ differ by at most $W$. 
\end{claim*}

Also, in $\tilde{O}(n^{2+s}/\Delta)$ time, we can compute a sorted list $\mathcal{L}^{i, j}_{\text{small}}$ of indices $k$ sorted by the value $A_{i,k}+\hat{B}_{k,j} - \hat{C}^L_{k,j}$, for every $i$, and every $j$ that is a multiple of $\Delta$.

The space complexity in this step is not dominating.\medskip

\noindent\textbf{Preprocessing Step 2: Perform Calls to Lemma~\ref{lem:A_small}}\vspace{1pt}

For some integer $\rho \ge 1$, we will perform $\rho$ rounds of the following algorithm. At the $r$-th round for some $1 \le r \le \rho$, we randomly sample $j^r \in [n]$, and let $A^r_{i,k} := A_{i,k} + B_{k, j^r} - \hat{C}^L_{i,j^r}$ and $B^{r}_{k,j}:= B_{k,j} - B_{k, j^r}$. 
Clearly, $A^r_{i,k} + B^r_{k,j} = A_{i,k}+B_{k,j} - \hat{C}^L_{i,j^r}$. 
For each pair $(i, k)$, we find the smallest $r$ such that $|A^r_{i,k}| \le 3W$. We keep these entries as they are and replace all other entries by $\infty$. For every $(i, k)$, there exists at most one $r$ such that $A^r_{i,k} \ne \infty$.
Then we use Lemma~\ref{lem:A_small} to preprocess $A^r$ and $B^r$ for every $1 \le r \le \rho$. Thus, this part  takes $O(\rho \frac{n^s}{P} W n^{\omega(s)})$ time, for some integer $P$ to be determined later. Note that this parameter also affects the query time. This step stores $\rho$ copies of the data structure from Lemma~\ref{lem:A_small}, so the space complexity is $\tilde{O}(\rho \frac{W n^{2+s}}{P} + \rho \frac{n^{1+2s}}{P})$. 

Note that this step is the only step that uses randomization. We can use the method of \cite{Williams20}, Appendix A, to derandomize it. We omit the details for simplicity.\medskip

\noindent\textbf{Preprocessing Step 3: Handling Uncovered Pairs}\vspace{1pt}

For a pair $(i, k)$, 
if $A^r_{i, k} \ne \infty$  for any $r$, we call $(i, k)$ \textit{covered}; otherwise, we call the pair $(i, k)$ \textit{uncovered}. For each pair $(i, j)$, we enumerate all $k$ such that $|A_{i,k}+\hat{B}_{k, j} - \hat{C}^L_{i,j}| \le 2W$ and $(i, k)$ is uncovered. Notice that since $A_{i,k}+\hat{B}_{k, j} - \hat{C}^L_{i,j} = A_{i,k}+\hat{B}_{k, \lceil j / \Delta \rceil \Delta} - \hat{C}^L_{i,\lceil j / \Delta \rceil \Delta}$, we only need to exhaustively enumerate all $k \in [n^s]$ when $j$ is a multiple of $\Delta$. Thus, if the total number of  $(i, k, j)$ where $|A_{i,k}+\hat{B}_{k, j} - \hat{C}^L_{i,j}| \le 2W$ and $(i, k)$ is uncovered is $X$, then we can enumerate all such triples $(i, k, j)$ in $O(X+n^{2+s}/\Delta)$ time. 

It remains to bound the total number of triples that satisfy the condition. Fix an arbitrary pair $(i, k)$, and suppose the number of $j$ such that $|A_{i,k}+\hat{B}_{k, j} - \hat{C}^L_{i,j}| \le 2W$ is at least $(10+s)n \ln n/\rho$. Then with probability at least $1-(1-\frac{(10+s) \ln n}{\rho})^\rho \ge 1-\frac{1}{n^{10+s}}$, we pick a $j^r$ where $|A_{i,k}+\hat{B}_{k, j^r} - \hat{C}^L_{i,j^r}| \le 2W$. Therefore,
\[|A^r_{i,k}| = \left| A_{i,k} + B_{k, j^r} - \hat{C}^L_{i,j^r} \right| \le \left| A_{i,k} + \hat{B}_{k, j^r} - \hat{C}^L_{i,j^r} \right| + \left|\hat{B}_{k, j^r} - B_{k, j^r} \right| \le 3W,\]
which means $(i, k)$ is covered. Therefore, with high probability, all pairs of $(i, k)$ where the number of $j$ such that $|A_{i,k}+\hat{B}_{k, j} - \hat{C}^L_{i,j}| \le 2W$ is at least $(10+s)n \ln n/\rho$ will be covered. In other words, $X = O(n^{1+s} \cdot n \ln n / \rho)=\tilde{O}(n^{2+s} /\rho)$. 

For each pair $(i, j)$, if we enumerate more than $L$ indices $k$, we only keep the $L$ values of $k$ that give the smallest values of $A_{i,k}+B_{k,j}$. We call this list $\mathcal{L}^{i,j}_{\text{triple}}$. From previous discussion, the time cost in this step is $\tilde{O}(n^{2+s}/\rho +n^{2+s}/\Delta)$. Since we need to store all the triples, the space complexity is $O(n^{2+s}/\rho)$.\medskip

\noindent\textbf{Handling Queries}\vspace{1pt}

Now we discuss how to handle queries. 
For each query $(S, i, j)$, let $k^* =\argmin_{k \not \in S} A_{i,k}+B_{k,j}$ be the optimum index. Consider two cases:
\begin{itemize}
\item $(i, k^*)$ is covered. By definition of being covered, there exists a round $r$ such that $A^r_{i,k^*} = A_{i,k^*}+B_{k^*,j^r}- \hat{C}^L_{i,j^r}$, so $A^r_{i, k^*}+B^r_{k^*, j}=A_{i, k^*}+B_{k^*,j}-\hat{C}^L_{i, j^r}$. Therefore, we can query the data structure in Lemma~\ref{lem:A_small} for every $A^r$ and $B^r$ and denote $b^r$ as the result. The answer is given by the smallest value of $b^r+\hat{C}^L_{i,j^r}$ over all $r$. The witness is given by the data structure of Lemma~\ref{lem:A_small}. 

Note that when querying $A^r$ and $B^r$, we only need to pass the set $\{k \in S: A^r_{i, k} \ne \infty\}$. For every $k \in S$, there is at most one $r$ such that $A^r_{i,k} \ne \infty$, 
so the total size of the sets passing to the data structure of Lemma~\ref{lem:A_small} is $|S|$. Thus, this case takes $O(|S| + \rho P)$ time. 

\item  $(i, k^*)$ is uncovered. There are still two possibilities to consider in this case. 

\begin{itemize}
\item \textbf{Possibility I}: $A_{i,k^*}+\hat{B}_{k^*, j} - \hat{C}^L_{i,j} < -2W$. In this case, 
\[A_{i,k^*}+B_{k^*, j} \le A_{i,k^*}+\hat{B}_{k^*, j} + W < \hat{C}^L_{i,j} - W,\]
so the optimum value is smaller than $\hat{C}^L_{i,j}$. By reading the list $\mathcal{L}^{i, \lceil j/\Delta\rceil \Delta}_{\text{small}}$, we can effectively find all such $k$ where $A_{i,k}+\hat{B}_{k, j} - \hat{C}^L_{i,j} < -2W$ in time linear to the number of such $k$. The number of such $k$ is at most $L$, by the definition of $\hat{C}^L_{i,j}$. Thus, this part takes $O(L)$ time. 

\item \textbf{Possibility II}: $A_{i,k^*}+\hat{B}_{k^*, j} - \hat{C}^L_{i,j} \ge -2W$. In fact, in this case, we further have
\[A_{i,k^*}+\hat{B}_{k^*,j} - \hat{C}^{L}_{i, j}\le  A_{i,k^*}+B_{k^*,j} - C^L_{i,j} +2W \le 2W,\]
where $A_{i,k^*}+B_{k^*,j} - C^L_{i,j} \le 0$  because $|S| < L$. Therefore, in this case, we have $|A_{i,k^*}+\hat{B}_{k^*, j} - \hat{C}^L_{i,j}| \le 2W$, so
we can enumerate  all indices in $\mathcal{L}^{i, j}_{\text{triple}}$ and take the best choice. 
This takes $O(L)$ time.
\end{itemize}
\end{itemize}

\noindent\textbf{Time and Space Complexity}\vspace{1pt}

In summary, the preprocessing time is \[\tilde{O}\left(\rho \frac{n^s}{P} W n^{\omega(s)} + n^{2+s}/\Delta + n^{2+s}/\rho\right),\]
and the query time is 
$\tilde{O}(L + \rho P)$.
To balance the terms, we can set $\rho = \Delta$ and $P = \frac{L}{\Delta}$ to achieve a $\tilde{O}(\Delta^2 \frac{n^s}{L} W n^{\omega(s)} + \frac{n^{2+s}}{\Delta})$ preprocess time and a $\tilde{O}(L)$ query time. Note that since we need $P \ge 1$, we must have $L = \Omega(\Delta)$. 

From the preprocessing steps, the space complexity is $\tilde{O}(\rho \frac{W n^{2+s}}{P} + \rho \frac{n^{1+2s}}{P} + n^{2+s} / \rho)$. Plugging in $\rho = \Delta$ and $P = \frac{L}{\Delta}$ reduces this to
\[\tilde{O}\left(\frac{\Delta^2 W n^{2+s}}{L} + \frac{\Delta^2 n^{1+2s}}{L} + \frac{n^{2+s}}{\Delta}\right),\]
as given in the statement of the lemma.
\end{proof}

The next lemma is our last data structure for Min-Plus-Query-Witness problems. 

\begin{lemma}\label{lem:lem_multi_inc}
Let $s \ge 1$ be a constant. Let $A$ be an $n \times n^s$ integer matrix and $B$ be  an $n^s \times n$ integer matrix. Suppose matrix $B$ satisfies
\begin{enumerate}
    \item Each row of $B$ is non-increasing;
    \item The difference between the sum of elements in the $j$-th column and the sum of elements in the $(j+1)$-th column is at most $n^s$, for any $j$.
\end{enumerate}
Then for every positive integer $L = \Omega(n^{\omega(s) - 2})$, 
 we can solve the Min-Plus-Query-Witness problem of $A$ and $B$ in $\tilde{O}(n^{\frac{8}{5}+s+\frac{1}{5} \omega(s)} L^{-\frac{1}{5}})$ preprocessing time and $\tilde{O}(L)$ query time, when $|S| < L$. The space complexity is $\tilde{O}(L^{-\frac{1}{5}} n^{\frac{18}{5} + s - \frac{4}{5}\omega(s) }+L^{-\frac{3}{5}} n^{\frac{9}{5} + 2s - \frac{2}{5}\omega(s)}  + L^{-\frac{1}{5}} n^{\frac{8}{5} + s + \frac{1}{5}\omega(s)}).$
 
\end{lemma}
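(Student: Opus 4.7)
The plan is to reduce the problem to Lemma~\ref{lem:bounded_diff_B} by splitting $B$ into a matrix $B^{\mathrm{good}}$ that satisfies its bounded-difference hypothesis together with a small set of ``bad'' row--column pairs handled by a separate top-list precomputation. The starting observation is that the two assumed properties of $B$---each row non-increasing and consecutive column sums differing by at most $n^s$---combine into the averaging bound $\sum_k (B_{k, j_0} - B_{k, j_0+\Delta}) \le \Delta n^s$ for any block of $\Delta$ consecutive columns.

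For a block width $\Delta$ and a difference threshold $W \ge \Delta$, both to be fixed later, I would partition the columns into $n/\Delta$ blocks and, for each row $k$ and block $b$ with column range $[j_0, j_0+\Delta)$, call $(k,b)$ \emph{good} if $B_{k, j_0} - B_{k, j_0+\Delta} \le W$ and \emph{bad} otherwise. Non-increasing rows give $|B_{k, j_1} - B_{k, j_2}| \le W$ for every pair in a good block, while the averaging bound yields at most $\Delta n^s / W$ bad rows per block. I define $B^{\mathrm{good}}$ by $B^{\mathrm{good}}_{k,j} := B_{k,j}$ when $(k, \mathrm{block}(j))$ is good and $B^{\mathrm{good}}_{k,j} := B_{k, j_0}$ otherwise. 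Within any single block, the row-wise variation of $B^{\mathrm{good}}$ is either $\le W$ (good rows) or exactly $0$ (bad rows, forced to the block-start value), so $B^{\mathrm{good}}$ satisfies the hypothesis of Lemma~\ref{lem:bounded_diff_B} with parameters $\Delta$ and $W$. Invoking that lemma with parameter $L$ yields a data structure $\mathcal{D}_{\mathrm{good}}$ for Min-Plus-Query-Witness on $A$ and $B^{\mathrm{good}}$.

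In parallel, for each pair $(i,j)$ I would precompute a sorted list (with witnesses) of the values $A_{i,k} + B_{k,j}$ over $k$ with $(k, \mathrm{block}(j))$ bad, truncating to length $L$. Since each block contributes at most $\Delta n^s/W$ bad rows and there are $\Delta$ columns per block, the total space and time for this step are $\tilde{O}(\Delta n^{2+s}/W)$. For a query $(i, j, S)$ with $|S| < L$, the answer is the smaller of (i) the output $M_1$ of $\mathcal{D}_{\mathrm{good}}$ on $(i, j, S)$ and (ii) the value $M_2$ obtained by walking the precomputed list at $(i, j)$ and returning the first entry whose witness index is not in $S$. Correctness holds because $B^{\mathrm{good}} \ge B$ pointwise (block-starts are row maxima within their block), so the contribution of good indices to $M_1$ equals their true contribution, while any overestimate that $M_1$ incurs from a bad witness is dominated by the corresponding true value, which $M_2$ sees---and $M_2$ never misses that optimum because $|S| < L$ can shift the true-bad minimum by at most $L-1$ positions in the sorted list. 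The query time is $\tilde{O}(L)$, dominated by the walk.

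Finally, I would balance the three preprocessing terms $\tilde{O}(\Delta^2 W n^{s+\omega(s)}/L)$ and $\tilde{O}(n^{2+s}/\Delta)$ from Lemma~\ref{lem:bounded_diff_B} together with $\tilde{O}(\Delta n^{2+s}/W)$ from the bad lists by setting $\Delta = n^{(2-\omega(s))/5} L^{1/5}$ and $W = n^{(4-2\omega(s))/5} L^{2/5}$. A direct substitution shows that all three terms equal $\tilde{O}(n^{8/5 + s + \omega(s)/5} L^{-1/5})$, matching the claimed preprocessing bound; the constraint $W \ge \Delta$ from Lemma~\ref{lem:bounded_diff_B} simplifies to exactly $L = \Omega(n^{\omega(s)-2})$, matching the lemma's hypothesis; and the three space terms of Lemma~\ref{lem:bounded_diff_B} evaluate to the three exponents in the statement, with the bad-list space absorbed into the last one. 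The main obstacle will be the correctness argument in the combined query: because $\mathcal{D}_{\mathrm{good}}$ can output a witness that is itself a bad index, one must check carefully that $\min(M_1, M_2)$ always coincides with the true optimum, and that the length-$L$ truncation of the bad lists is safe under $|S| < L$. Everything else is routine substitution into Lemma~\ref{lem:bounded_diff_B}.
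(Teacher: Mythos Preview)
Your proposal is correct and follows essentially the same approach as the paper: partition the columns of $B$ into blocks of width $\Delta$, use the averaging argument to bound the number of ``bad'' rows per block by $\Delta n^s/W$, feed the modified matrix into Lemma~\ref{lem:bounded_diff_B}, handle the bad entries by a separate per-$(i,j)$ structure, and balance with $\Delta = L^{1/5}n^{(2-\omega(s))/5}$ and $W=\Delta^2$. The only differences are cosmetic: the paper replaces bad entries by a large constant $M$ (so they never win in $\mathcal{D}_{\mathrm{good}}$) and stores the bad values in range trees, whereas you replace bad entries by the block-start value $B_{k,j_0}$ (so $B^{\mathrm{good}}\ge B$ pointwise) and store truncated top-$L$ sorted lists. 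Both variants yield the same time and space bounds, and your pointwise inequality $B^{\mathrm{good}}\ge B$ makes the witness-correctness check you flag as the ``main obstacle'' go through cleanly: if $M_1<M_2$ then the $M_1$-witness must be good (a bad witness $k_1$ would force $M_2\le A_{i,k_1}+B_{k_1,j}\le M_1$), and if $M_2\le M_1$ the list-walk witness is valid because $|S|<L$ guarantees one of the top-$L$ bad entries survives.
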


\begin{proof}

Let $\Delta, W \ge 1$ be small polynomials in $n$ to be fixed later. Define $I(j)$ to be the interval $[j-\Delta+1, j]$. 

Let $j'$ be any multiple of $\Delta$. By property \textbf{2} of matrix $B$, $\sum_{k=1}^{n^s} B_{k,j} - \sum_{k=1}^{n^s} B_{k,j+1} \le n^s$ for any $j \in I(j')$. Thus, we have
\[\sum_{k=1}^{n^s} B_{k,j
'-\Delta+1} - \sum_{k=1}^{n^s} B_{k,j'}\le \Delta n^s. \]
By averaging, there are at most $\Delta n^s / W$ indices $k \in [n^s]$ such that $B_{k,j
'-\Delta+1}-B_{k,j'} > W$. We create a new matrix $\hat{B}$, initially the same as matrix $B$. For each $k$ such that $B_{k,j'
-\Delta+1}-B_{k,j'}  > W$, and for each $j \in I(j')$, we set $\hat{B}_{k,j}$ as $M$, where $M$ is some large enough integer. After this replacement, $\hat{B}_{k,j'
-\Delta+1}-\hat{B}_{k,j'}  \le W$ for any $k$ and any $j'$ multiple of $\Delta$. Also, since $\hat{B}_{k,j'
-\Delta+1} \ge \hat{B}_{k,j} \ge \hat{B}_{k,j'}$ for any $j \in I(j')$, we have that $|\hat{B}_{k,j_1} - \hat{B}_{k,j_2}| \le W$ as long as $\lceil j_1 / \Delta \rceil = \lceil j_2 / \Delta \rceil$. Therefore, we can use Lemma~\ref{lem:bounded_diff_B} to preprocess $A$ and $\hat{B}$ in $O(\Delta^2 \frac{n^s}{L} W n^{\omega(s)} + \frac{n^{2+s}}{\Delta})$ time. The space complexity is $\tilde{O}(\frac{\Delta^2 W n^{2+s}}{L} + \frac{\Delta^2 n^{1+2s}}{L} + \frac{n^{2+s}}{\Delta})$.

On the other hand, note that $\hat{B}$ differs with $B$ on at most $n^{1+s} \Delta / W$ entries, so we need to do some extra preprocessing to handle those entries. For each pair $(i, j)$, we initialize a range tree $\mathcal{T}^{(i,j)}$ whose elements are all $\infty$ (it takes $\tilde{O}(1)$ time to initialize each range tree if we implement it carefully). Then for every $k$ such that $B_{k,j} \ne \hat{B}_{k, j}$, we set the $k$-th element in $\mathcal{T}^{(i,j)}$ as $A_{i,k}+B_{k,j}$. 
The total number of operations we perform in all the range trees are $O(n^{2+s} \Delta / W)$, so this part takes $\tilde{O}(n^{2+s} \Delta / W)$ time. The space complexity is also $\tilde{O}(n^{2+s} \Delta / W)$.

During a query $(S, i, j)$, we first query the data structure in Lemma~\ref{lem:bounded_diff_B} on matrix $A$ and $\hat{B}$ with parameters $(S, i, j)$. Then we query the minimum value from the range tree $\mathcal{T}^{(i,j)}$ after setting all $A_{i,k}+B_{k,j}$ as $\infty$ for $k \in S$. Taking the minimum of these two queries gives the answer. The optimum index $k^*$ is either given by the data structure of Lemma~\ref{lem:bounded_diff_B} or can be obtained from the range tree.

Thus, the preprocessing time of the algorithm is 
\[\tilde{O}(\Delta^2 \frac{n^s}{L} W n^{\omega(s)} + \frac{n^{2+s}}{\Delta}+n^{2+s} \Delta / W), \]
and the query time is $\tilde{O}(L)$.
We get the desired preprocessing time by setting $\Delta = L^{1/5}n^{\frac{2}{5}-\frac{1}{5} \omega(s)} $ and $W = \Delta^2$. Since we need $\Delta \ge 1$, we require that $L = \Omega(n^{\omega(s) - 2})$. In Lemma~\ref{lem:bounded_diff_B}, we also requires that $L=\Omega(\Delta)$, but this is always true when $L = \Omega(n^{\omega(s) - 2})$.

By previous discussion, the space complexity is $\tilde{O}(\frac{\Delta^2 W n^{2+s}}{L} + \frac{\Delta^2 n^{1+2s}}{L} + \frac{n^{2+s}}{\Delta}+n^{2+s} \Delta / W)$. Plugging in the value for $\Delta$ and $W$ simplifies the complexity to 
\[
\tilde{O}(L^{-1/5} n^{18/5 + s - 4\omega(s)/5 }+L^{-3/5} n^{9/5 + 2s - 2\omega(s)/5}  + L^{-1/5} n^{8/5 + s + \omega(s) / 5}).
\]

\end{proof}

Finally, we can apply the data structure of Lemma~\ref{lem:lem_multi_inc} to prove Theorem~\ref{maintheorem}.

\begin{proof}[Proof of Theorem \ref{maintheorem}]

For clarity, we will use \textit{element} to refer to a specific item $a_i$ of the sequence and use \textit{value} to refer to all elements of a given type. Given a pointer to an element of the sequence $a_i$, we assume the ability to look up its index $i$ in the sequence in $\tilde{O}(1)$ time by storing all elements of the sequence in a balanced binary search tree with worst-case time guarantees (e.g. a red-black tree). Thus we can go from index $i$ to element $a_i$ and back via appropriate rank and select queries on the balanced binary search tree. We may also add or remove an element $a_i$ from the sequence, and thus the binary search tree, in $\tilde{O}(1)$ time.

Let $T_1, T_2, T_3$ be three parameters of the algorithm. Parameter $T_1$ is a threshold that controls the number of ``frequent'' colors,
 $T_2$ controls how frequently the data structure is rebuilt,
 and $T_3$ represents the size of blocks in the algorithm.

We call values that appear more than $N/T_1$ times \textit{frequent} and all other values \textit{infrequent}. Thus, there are at most $T_1$ frequent values at any point in time. Note that a fixed value can change from frequent to infrequent, or from infrequent to frequent, via a deletion or insertion.\medskip

\noindent\textbf{Infrequent Values}\vspace{1pt}

First, we discuss how to handle infrequent values. We maintain $\frac{N}{T_1}$ balanced search trees $\mathcal{BST}_1, \ldots,  \mathcal{BST}_{\frac{N}{T_1}}$. 
For balanced search tree $\mathcal{BST}_k$, we prepare the key/value pairs in the following way. Fix a given value of the sequence. Say all its occurrences are at indices $i_1, i_2, \ldots, i_t$. Then we insert the key/value pairs $(i_x, i_{x + k - 1})$ to $\mathcal{BST}_k$ for every $1 \le x \le t - k + 1$. However, the indices themselves would need updating when sequence $a$ is updated. Instead of inserting the indices themselves, we insert corresponding pointers to the nodes of the binary search tree that holds sequence $a$. That way we can perform all comparisons using binary search tree operations in $\tilde{O}(1)$ time, without needing to update indices when sequence $a$ changes. We also augment each balanced search tree $\mathcal{BST}_i$ so that every subtree stores the smallest value $y$ of any pair $(x, y)$ in the subtree. After an insertion or deletion, we need to update a total of $O((\frac{N}{T_1})^2)$ pairs. Thus, we can maintain these  balanced search trees in $\tilde{O}((\frac{N}{T_1})^2)$ time per operation.

During a query $[l, r]$, we iterate through all the balanced search trees $\mathcal{BST}_1, \ldots,  \mathcal{BST}_{\frac{N}{T_1}}$. If there exists a pair $(i_1, i_2) \in \mathcal{BST}_k$ such that $l \le i_1 \le i_2 \le r$, then the range mode is at least $k$. Thus, if the range mode is an infrequent value, we can find its frequency and corresponding value by querying the balanced search trees. The query time is $\tilde{O}(\frac{N}{T_1})$, which is not the dominating term.\medskip

\noindent\textbf{Newly Modified Values}\vspace{1pt}

We now consider how to handle frequent values. We handle newly modified values and unmodified values differently. We will rebuild our data structure after every $T_2$ operations, and call values that are inserted or deleted after the last rebuild \textit{newly modified values}. 

For every value, we maintain a balanced search tree of occurrences of this value in the sequence. It takes $\tilde{O}(1)$ time per operation to maintain such balanced search trees. 
Thus, given an interval $[l, r]$, it takes $\tilde{O}(1)$ time to query the number 
of occurrences of a particular value in the interval. We use this method to query the number of occurrences of each newly modified value. Since there can be at most $T_2$ such values, this part takes $\tilde{O}(T_2)$ time per operation.\medskip

\noindent\textbf{Data Structure Rebuild}\vspace{1pt}

It remains to handle the frequent, not newly modified values during each rebuild. In this case, we will assume we can split the whole array roughly equally into a left half and right half. We can recursively build the data structure on these two halves so that we may assume a range mode query interval has left endpoint in the left half and right endpoint in the right half. The recursive construction adds only a poly-logarithmic factor to the complexity. 

We split the left half and the right half into consecutive segments of length at most $T_3$, so that there are $O(N/T_3)$ segments. We call the segments $P_1, P_2, \ldots, P_m$ in the left half and $Q_1, Q_2, \ldots, Q_m$ in the right half, where segments with a smaller index are closer to the middle of the sequence.

Let $v_1, v_2, \ldots, v_l$ be the frequent values during the rebuild. We create a matrix $A$ such that $A_{i,k}$ equals the negation of the number of occurrences of $v_k$ in segments $P_1, \ldots, P_i$; similarly, we create a matrix $B$ such that $B_{k, j}$ equals the negation of the number of occurrences of $v_k$ in segments $Q_1, \ldots, Q_j$. Note that the negation of the value $A_{i,k}+B_{k,j}$ is the frequency of value $v_k$ in the interval from $P_i$ to $Q_j$. 
It is not hard to verify that matrix $B$ satisfies the requirement of Lemma~\ref{lem:lem_multi_inc}.  We take the negation here since Lemma~\ref{lem:lem_multi_inc} handles $(\min, +)$-product instead of $(\max, +)$-product. Then we use the preprocessing part of Lemma~\ref{lem:lem_multi_inc} with matrices $A, B$, and $L=T_2$. If we let $T_1 = N^{t_1}, T_2 = N^{t_2}, T_3 = N^{t_3}$, then in the notation of Lemma~\ref{lem:lem_multi_inc}, $n=m=O(N/T_3)=O(N^{1-t_3})$ and $n^s = O(T_1) = O(N^{t_1})$, so $s=\frac{t_1}{1-t_3}$ and $L = N^{t_2}$. Thus, by Lemma~\ref{lem:lem_multi_inc} the rebuild takes
\[\tilde{O}(N^{(1-t_3)(\frac{8}{5}+\frac{t_1}{1-t_3}+\frac{1}{5}\omega( \frac{t_1}{1-t_3})) - \frac{1}{5}t_2})\]
time. Since we perform the rebuild every $T_2$ operations, the amortized cost of rebuild is
\[\tilde{O}(N^{(1-t_3)(\frac{8}{5}+\frac{t_1}{1-t_3}+\frac{1}{5}\omega( \frac{t_1}{1-t_3})) - \frac{6}{5}t_2})\]
per operation.

Now we discuss how to handle queries for frequent, unmodified elements. 
For a query interval $[l, r]$, we find all the segments inside the interval $[l, r]$. The set of such segments must have the form $P_1, \cup \cdots \cup P_i \cup Q_1 \cup \cdots \cup Q_j$ for some $i, j$. We scan through all elements in $[l, r] \setminus (P_1 \cup \cdots \cup P_i \cup Q_1 \cup \cdots \cup Q_j)$, and use their frequency to update the answer. Since the size of segments is $O(T_3)$, the time complexity to do so is $\tilde{O}(T_3)$. 

For the segments $P_1, \ldots, P_i, Q_1, \ldots, Q_j$, we query the data structure in Lemma~\ref{lem:lem_multi_inc} with $S$ being the set of newly modified elements. The answer will be the most frequent element in the interval from $P_i$ to $Q_j$ that is not newly modified.
By Lemma~\ref{lem:lem_multi_inc} this takes 
$O(L) = O(N^{t2})$ time
per operation.\medskip

\noindent\textbf{Time and Space Complexity}\vspace{1pt}

In summary, the amortized cost per operation is 
\[\tilde{O}(N^{2-2t_1} + N^{t_2}+N^{t_3}+ N^{(1-t_3)(\frac{8}{5}+\frac{t_1}{1-t_3}+\frac{1}{5}\omega( \frac{t_1}{1-t_3})) - \frac{6}{5}t_2}).\]
To balance the terms, we set $t_1 = 1- \frac{1}{2} t_2$, and $t_3 = t_2$. The time complexity thus becomes 
\[\tilde{O}(N^{t_2}+ N^{(1-t_2)(\frac{8}{5}+\frac{1-0.5t_2}{1-t_2}+\frac{1}{5}\omega( \frac{1-0.5t_2}{1-t_2})) - \frac{6}{5}t_2}).\]
By observation, we can note that the optimum value of $\frac{1-0.5t_2}{1-t_2}$ lies in $[1.75, 2]$. Thus, we can plug in Corollary~\ref{cor:rect_mult} and use $t_2 = 0.655994$ to balance the two terms. This gives an $\tilde{O}(N^{0.655994})$ \textit{amortized} time per operation algorithm.

The space usage has two potential bottlenecks. The first is the space to store \sloppy $\mathcal{BST}_1, \ldots, \mathcal{BST}_{\frac{N}{T_1}}$ for handling infrequent elements, which is $\tilde{O}(\frac{N^2}{T_1})$.
The second is the space used 
by Lemma~\ref{lem:lem_multi_inc}, which is 
\small
\[
\tilde{O}(N^{-t_2/5} N^{(1-t_3)(18/5 + s - 4\omega(s)/5)}+N^{-3t_2/5} N^{(1-t_3) (9/5 + 2s - 2\omega(s)/5)}  + N^{-t_2/5} N^{(1-t_3)(8/5 + s + \omega(s) / 5})).
\]
\normalsize
By plugging in the values for $t_2, t_3$ and $s$, the space complexity becomes $\tilde{O}(N^{1.327997})$, with the $\tilde{O}(\frac{N^2}{T_1})$ term being the dominating term.
\medskip

\fussy

\noindent\textbf{Worst-Case Time Complexity}\vspace{1pt}

By applying the \textit{global rebuilding} of Overmars \cite{overmars1983design}, we can achieve a worst-case time bound. The basic idea is that after $T_2$ operations, we don't immediately rebuild the Min-Plus-Query-Witness data structure. Instead, we rebuild the data structure during the next $T_2$ operations, spreading the work evenly over each operation. To answer queries during these $T_2$ operations, we use the previous build of the Min-Plus-Query-Witness data structure.
By this technique, the per-operation runtime can be made worst-case.



\end{proof}

\bibliography{ref}

\end{document}